\newtheorem{mydef}{Definition}
\newtheorem{lemma}{Lemma}
\newtheorem{myproposition}{Proposition}
\definecolor{lightblue}{rgb}{0,0,0}
\providecommand{\AJ}[2][]{}
\newcommand{\mli}[1]{\ensuremath{\mathit{#1}}}
\providecommand{\WorkItemSet}{\ensuremath{\mli{WorkItem}}\xspace}
\providecommand{\WorkItem}{\ensuremath{\mli{workitem}}\xspace}
\providecommand{\WorkItems}{\ensuremath{\mli{workitems}}\xspace}
\providecommand{\wis}{\ensuremath{\mli{wis}}\xspace}
\providecommand{\WIS}{\ensuremath{\mli{WIS}}\xspace}
\providecommand{\PF}{$\pi$\xspace}
\providecommand{\PFM}{\pi}
\providecommand{\secref}{ Section~\ref}
\providecommand{\defref}{ Definition~\ref}
\providecommand{\propref}{ Proposition~\ref}
\newcommand{\powerset}[1]{\mathbb{P}(#1)}
\def\sharedaffiliation{%
\end{tabular}
\begin{tabular}{c}}
\begin{document}






%

\title{Abstract Graph Machine}
%
%
%
%
%

\numberofauthors{4} 
%
\author{
%
%
\alignauthor Thejaka Amila Kanewala
\alignauthor Marcin Zalewski
\alignauthor Andrew Lumsdaine\\
      \sharedaffiliation
      \affaddr{Center for Research in Extreme Scale Technologies (CREST)}\\
      \affaddr{Indiana University, IN, USA}\\
      \email{\{thejkane,zalewski,lums\}@indiana.edu}
}


\date{05 February 2016}

\maketitle
\begin{abstract}

An \textit{Abstract Graph Machine}(AGM) is an 
abstract model for distributed memory parallel
\textit{stabilizing} graph algorithms. A stabilizing
algorithm starts from a particular \textit{initial} state and
goes through series of different state changes
until it converges. 
The AGM adds work dependency to the stabilizing algorithm.
The work is processed within the \textit{processing
function}. All processes in the system
execute the same processing function.
Before feeding work into the processing
function, work is ordered using a \textit{strict weak ordering} relation.
The strict weak ordering relation divides work into
\textit{equivalence classes}, hence work within a single
equivalence class can be processed in parallel, but
work in different 
equivalence classes must be executed in the order
they appear in equivalence classes.
The paper presents the AGM model, semantics and AGM models for several
existing distributed memory parallel graph algorithms.
\end{abstract}

\keywords{Graphs, Distributed Memory Parallel, Algorithms}

\section{Introduction}
\label{intro}
Graphs are ubiquitous data structures.
Many real-world relations are formulated
as graphs and graph algorithms are used to 
derive various characteristics related to those
relations. Applications such as social networks,
web search and scientific computing etc., 
use graph algorithms to derive useful information about graphs.
As for many other data, graph relations are
also growing so that they cannot be fit into
a graph data structure in a single process.
Those graphs need to be distributed
among several
computing resources and must be processed in
parallel.



Designing, implementing and analyzing distributed memory
parallel algorithms is inherently a difficult task
due to several reasons: 1. distributed memory
parallel algorithms depend on the data distribution
used, 2. designing and implementing proper mutual
exclusion and locking methods for distributed memory
systems is hard, 3. graphs are irregular in memory
access pattern. Therefore, the performance
of graph algorithms is not predictable as in 
for other regular algorithms, 4. due to irregularity,
graph algorithms heavily depend on the underlying
run-time and the architecture.

To provide better solutions to above
challenges, abstractions that help to understand the nature
of distributed memory parallel graph algorithms are
important. Developing such abstractions is difficult,
due to the discrepancy between the solution approaches
used in those distributed memory
parallel graph algorithms.
For example, \textit{Dijkstra's Single Source Shortest Path}~\cite{dijkstra1959note}
algorithm starts from a given source vertex and
spread its search through neighbors, but \textit{Bor\r{u}vka's Minimum Spanning Tree}~\cite{nevsetvril2001otakar} 
algorithm rely on set operations (disjoint union) to calculate the
Minimum Spanning Tree (MST). 
While Bor\r{u}vka's Minimum Spanning Tree (MST) uses set 
operations in its solution, the Dijkstra's Single Source Shortest Path (SSSP)
algorithm relies on priority based ordering of distances of neighbors
to calculate the SSSP.
Therefore, the solution approach used in Dijkstra's 
SSSP
is different from the solution approach used in Bor\r{u}vka's
MST algorithm.

\begin{figure}
\centering
\includegraphics[width=.75\linewidth]{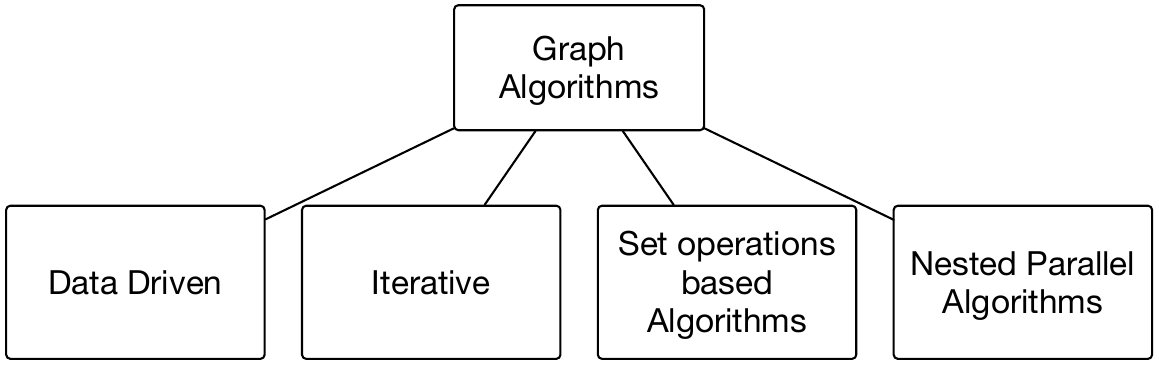}
\caption{Graph algorithm classification based on their solution approach.}
\vspace{-5ex}
\label{fig:classes}
\end{figure}

Based on the solution
approach, we classify parallel graph algorithms into four categories (Figure~\ref{fig:classes}):
1. Data-driven algorithms
2. Iterative algorithms
3. Set operations based algorithms
4. Nested parallel algorithms.

\textit{Data-driven algorithms} associate 
a \textit{state} to each vertex. At the start of the algorithm,
vertex states are initialized to a specific
value. The algorithm starts by changing the states of a subset of
vertices. Whenever a vertex state is changed, its neighbors
are notified. The state change is spread by notifying changes
to neighbors. Towards to end of the algorithm, state changes will 
be reduced and states reach a fixed point. When there is no
more state changes the algorithm terminates. An example of
a data-driven algorithm is the Dijkstra's SSSP.

As in data-driven algorithms,
\textit{iterative algorithms} also associate a state to a vertex, but 
instead of starting
from a subset of vertices, iterative algorithms
refine states associated with all vertices 
several iterations
until all vertex states satisfy a defined condition.
For example, in \textit{PageRank}~\cite{page1999pagerank},
the rank is calculated in iterations until rank values associated
with all vertices are less than the defined error value. \textit{Shiloach-Vishkin
Connected Components} (CC)~\cite{shiloach1982logn} is another
example of an iterative algorithm.

Part of the \textit{set operations based algorithms}, chase 
neighbors through edges, which is the standard
action, should take place in a graph algorithm and rest
of the algorithm rely on
set operations. An example is Bor\r{u}vka's
MST.
Bor\r{u}vka's MST uses disjoint sets of components and relies on
merging those components through set union operation to build
the MST. Another example of a parallel graph algorithm
that uses set operations is the 
\textit{Divide \& Conquer Strongly Connected Components} (DCSCC)~\cite{fleischer2000identifying}.  
In DCSCC, the vertex set is divided into two sets based on predecessor, 
successor reach-ability of a random vertex. Those two sets 
intersect to calculate the strongly connected component.

The \textit{nested parallel graph algorithms}
usually have an iterative or data-driven part. However,
enclosing the iterative or data-driven section these algorithms
have another iterative loop. An example for these kind of algorithms
is the \textit{Parallel Betweeness Centrality}~\cite{bader2006parallel} algorithm. Another
example is \textit{All-Pairs Shortest Path} algorithm.

Out of the algorithms discussed above, data-driven
algorithms and iterative algorithms can be formulated as
\textit{stabilizing} algorithms, in which
those algorithms start from a specific initial
vertex state and goes through several state
changes and converges. 
When all vertex states converged, the algorithm
terminates.
Further, those algorithms rely on standard graph
operations rather than set operations.
This paper proposes
an abstract model for converging graph algorithms
that only rely on graph operations.  



The \textit{Abstract Graph Machine} (AGM) is a 
mathematical abstraction for stabilizing graph algorithms.
In AGM each vertex is associated with a \textit{state}. 
States are changed by data propagating through edges. We call
data propagating through edges \textit{work}. The work
is executed on a uniform function, i.e. same
function is executed in every process.
We call this function \textit{processing function}.
The processing function takes a unit of work as the parameter
and may generate more work. Before processing, 
the work is ordered using
a \textit{strict weak ordering} relation. 
The strict weak ordering relation divides work into
\textit{equivalence classes}.

By dividing work into equivalence classes, the AGM
controls the rate at which algorithm converges.
When the amount of work in a single equivalence
class is higher the amount of available data parallelism
is also higher. However, when equivalence class is large
the amount of \textit{unnecessary work} (work that does
not contribute to the final algorithm state) is also higher.
The amount of
work in a single equivalence class is decided by the nature
of the strict weak ordering relation defined. 
The strict weak ordering relation also imposes an
induced ordering on the equivalence classes.
The equivalence classes are processed according to
the induced order. Work within a single equivalence
class can be processed in parallel but the processing of
work in
different equivalence classes is ordered.

The proposed abstraction provides a systematic
way to study stabilizing graph algorithms. Using AGM,
we can generalize existing data-driven graph algorithms
and also
derive new algorithms by composing orderings or
by introducing new attributes into the definition
of work. Further, AGM can be used for cost analysis
of distributed memory parallel graph algorithms.

In general, we will assume graph $G=(V,E)$,
where $V$ is the set of vertices and $E$ is the
set of edges. The algorithm states are maintained
in property maps and we assume graph and relevant
property maps are distributed  as a 1D distribution.
After defining necessary prerequisites, we give the
definition of an AGM in\secref{agm}.\secref{ddalgo}, models data-driven 
graph algorithms in AGM. We discuss AGM applicability 
to non-data-driven algorithms in\secref{agmnondd}.








\AJ{TODO : Our work is focused on unstructured graphs}
\AJ{define what is a data-driven algorithm}
\AJ{discuss about other types of algorithms}
\AJ{what is the main point of the paper ?}
\AJ{data-driven or work driven ?}
\AJ{Work profile for each type of algorithm}
\AJ{Explain what is a data-driven algorithm}

\section{Abstract Graph Machine}
\label{agm}

In this section, we present the \textit{Abstract Graph Machine}.
First, we will define some of the terminologies
that we will be using. Then, we will
layout the structure of the AGM.

The main function that encapsulates the
logic of a stabilizing algorithm is called
the \textit{processing function}. Parameters
to the processing function is a single
unit of work and we call it a \WorkItem. The definition
of the \WorkItem depends on the state/s in which
algorithm is focused on and a \WorkItem
must be indexed with a vertex or
an edge.
The set of all the \WorkItems that algorithm
generates is the set 
\WorkItemSet. 
When the processing function, processes a \WorkItem,
it may change the states associated
with vertices.
For example, in SSSP, the state is the distance
from the source vertex and
the processing function resembles the logic
inside ``relax''.
For SSSP, a \WorkItem
consists of a vertex and  
the distance associated to
the vertex and $\WorkItemSet \subseteq (V \times \mathbb{R}_+^*)$.

An \textit{Abstract Graph Machine}(AGM) consists of a definition of a \textit{\WorkItemSet} set,
an \textit{initial \WorkItem set},
a set of \textit{states},
a \textit{processing function}
and a \textit{strict weak ordering} relation.
In the following subsections, we discuss each of these parameters 
in detail.

\subsection{The WorkItem Set}
\label{sec:workitems}

A \WorkItem is a tuple that has a vertex or an edge as of its first
element. If the first element of a \WorkItem is a vertex,
then we call that \WorkItem a \textit{vertex indexed \WorkItem}
and if the first element is an edge we call that \WorkItem
an \textit{edge indexed \WorkItem}.
The additional elements in the \WorkItem, carry state data local to the vertex or an edge.
For example, distance in a SSSP algorithm is an additional element
in the ``SSSP algorithm \WorkItem''. In addition to vertex (or edge)
and state data, a \WorkItem may carry \textit{ordering
attributes}. Ordering attributes are used when
defining the strict weak ordering relation.

A \WorkItem is constructed and consumed within a processing 
function. 
The processing function, that constructs the \WorkItem
may not reside in the same locality as the processing function, that consumes
the \WorkItem. In other words, \WorkItems can
travel from one locality to another. A \WorkItem
destination locality is decided based on the
data distribution. For a 1D distributed
graph and for a vertex indexed \WorkItem, the destination
locality is decided based on the ownership of 
the indexed vertex (i.e. the first element of the \WorkItem).
Details about data distribution is discussed in\secref{sec:datadis}.

All the \WorkItems generated by an algorithm is the set \WorkItemSet.
\WorkItemSet is
formally defined in\defref{def-graph-wi}.

\begin{mydef}\label{def-graph-wi}
  The \textbf{\WorkItemSet} is a set. For a given graph, G = (V, E), the 
  $\WorkItemSet \subseteq (V \times P_0 \times P_1 \dots \times P_n)$ where each $P_i$ represents a state value or an ordering attribute value.
  A \WorkItem $\in \WorkItemSet$ is represented as a tuple (e.g., \WorkItem = \textless$v, p_0, p_1 \dots, p_n$\textgreater $\:$ where $v \in V$ and each $p_i \in P_i$).
\end{mydef}

\AJ{TODO : Mention how work item tuple values are accessed. i.e. ``[]'' operator}
To access values in a \WorkItem tuple
AGM uses \textit{bracket operator}. e.g., if $w \in \WorkItemSet$ and
if $w$ = \textless$v, p_0, p_1\dots, p_n$\textgreater
 then $w[0]$ = v and $w[1]$ = $p_0$ and $w[2]$ = $p_1$,
etc.

\subsection{Data Distribution}
\label{sec:datadis}
Data distribution is implicit in AGM
and AGM always assumes that
each vertex (or an edge)
has a single \textit{owner} node (process).
Ownership of a \WorkItem is decided
by the ownership of the indexed
vertex (or edge) of the \WorkItem.
When a \WorkItem is being produced by a 
processing function it must be sent to
its appropriate owner node for execution.

In addition, states are also
distributed based on vertex (or edge)
distribution. State value for a
vertex (or edge) is only
maintained at the owner. 


\subsection{States}

AGM uses \textit{mappings} to represent states.
Each vertex or edge records
a value algorithm is calculating. Collectively, all the values
recorded against vertices or edges is treated as \textit{the} state
of the algorithm.
States are read and updated by the processing function.
In AGM terminology,
accessing a state value associated with a vertex (or edge) ``v''
is denoted as ``mapping\_name(v)'' (E.g :- distance(v), where distance
is a mapping from vertex to distance from source in SSSP).
\AJ{Unify this across the paper}

In addition to state mappings, processing
functions access read-only graph properties. For example,
``edge weight'' is read as a read only property. In terms of 
syntax, AGM does not distinguish between a read-only 
property map and a state mapping. However, read-only graph properties
such as ``edge weight'' are part of the graph definition.

States and read-only graph properties are only used 
within the processing function.
Further, local updates to states are made \textit{atomically}.

\subsection{Processing Function}

The \textit{processing function} (\PF) takes a \WorkItem as an argument
and may produce more \WorkItems (or 0) based on the logic defined inside
the \PF. Mathematically, \PF is declared as
$\mli{\PFM} : \WorkItemSet \longrightarrow \powerset{\WorkItemSet}$.

The processing function consists of a set of \textit{statements}
(\defref{pf}).
Each statement specifies, how the output \WorkItem (i.e., $w_{new}$)
should be constructed (\textit{<constructor>}), a condition based on
input \WorkItem and/or states (\textit{<condition>}) and an update 
to states (\textit{<state\_update>}).
A statement
is invoked  if the condition evaluated
to true. A statement may not generate new \WorkItems but only changes
a state based on a condition. If a statement is only making changes to
states based on input \WorkItems, then it produces $w_{nil}$ \WorkItem as the output.
The \WorkItem $w_{nil}$, is not treated as an active \WorkItem; i.e.,
$w_{nil}$ is not ordered and also $w_{nil}$ is not fed to any of the 
processing functions for further processing.

\begin{mydef}\label{pf}
  $\mli{\PFM} : \WorkItemSet \longrightarrow \powerset{\WorkItemSet}$
\begin{equation*}
  \mli{\PFM}(w) =  \begin{cases}
    \{ w_{new} | <constructor>, <state\_update>, \\
    \;\;\; <condition> \}
    \\
    \{ w_{new} | <constructor>, <state\_update>, \\
    \;\;\; <condition> \}
    \\
    \dots
    \\
    \{\} \;\;\;\; \mli{else}
  \end{cases}
\end{equation*}
\end{mydef}

An algorithm starts by invoking processing function
with the \textit{initial \WorkItem set}. Output \WorkItems of
the \PF are ordered according to the strict weak ordering
defined on \WorkItems. Ordered \WorkItems are then again
fed into the processing function. The interaction between
\PF and ordering is graphically depicted in Figure~\ref{fig:agm}.

\begin{figure}
\centering
\includegraphics[width=.55\linewidth]{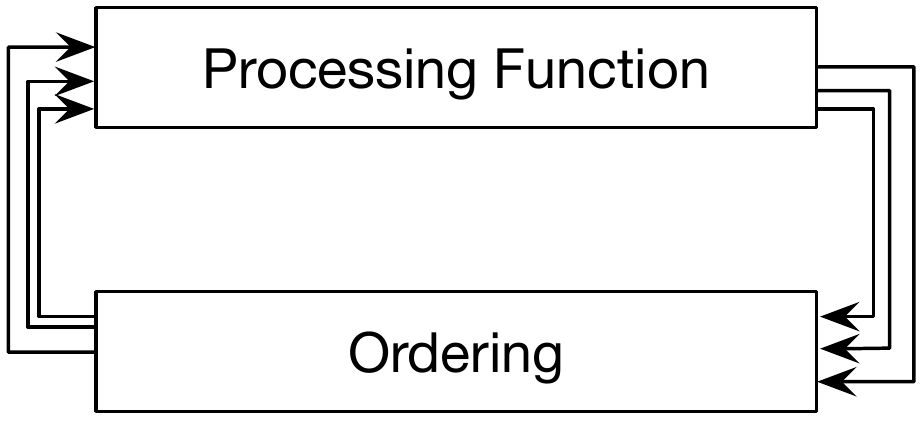}
\caption{Interaction between ordering and the processing function in an AGM}
\vspace{-1ex}
\label{fig:agm}
\end{figure}

\subsection{Ordering}

The AGM orders, output \WorkItems of the processing function
using a \textit{strict weak ordering} relation (denoted by $<_{\wis}$).
The strict weak ordering relation divides \WorkItemSet into 
equivalence classes based on the \textit{comparable}
relation ($<_{\wis}$).
The strict weak ordering relation
must satisfy following properties; \AJ{Or constraints ?}
\begin{enumerate}
\item For all $w \in \WorkItemSet$ $w \nless_{\wis} w$.
\item For all $w_1, w_2 \in \WorkItemSet$ if $w_1 <_{\wis} w_2$ then $w_2 \nless_{\wis} w_1$.
\item For all $w_1, w_2, w_3 \in \WorkItemSet$ if $w_1 <_{\wis} w_2$ and $w_2 <_{\wis} w_3$ then $w_1 <_{\wis} w_3$.
\item For all $w_1, w_2, w_3 \in \WorkItemSet$ if $w_1$ not comparable with $w_2$ and $w_2$ not comparable with $w_3$ then $w_1$ is not comparable with $w_3$.
\end{enumerate}

Properties 1 and 2 states that the strict
weak ordering relation is \textbf{not} \textit{reflexive} and
\textit{antisymmetric}. Property 3 denotes the
\textit{transitivity} of the ``comparable \WorkItems''
and Property 4 states that transitivity is preserved among
non-comparable elements in the \WorkItem set.
These properties give rise to an \textit{equivalence}
(i.e. non-comparable \WorkItems belong to the same
equivalence class)
relation defined on \WorkItem set, hence partition
the complete \WorkItem set. Since \WorkItems in
different equivalence classes are comparable, the
strict weak ordering relation defined on \WorkItem
set \textit{induces an ordering} on generated equivalence
classes. In general, there are several ways to define the induced
ordering relation (denoted $<_{\WIS}$), for our work we stick to the definition
given in~\defref{def-graph-aws-relation}.

\begin{mydef}\label{def-graph-aws-relation}
  $<_{\WIS}$ is a binary relation defined on $\powerset{\WorkItemSet}$, such that if $W_1, W_2 \in
  \powerset{\WorkItemSet}$ then; $W_1 \le_{\WIS} W_2
  \; iff \\
  \; forall \; w_1 \in W_1 \; and \; forall \; w_2 \in W_2 ; w_1 <_{\wis} w_2$.
\end{mydef}

\begin{figure}
\centering
\includegraphics[width=.45\linewidth]{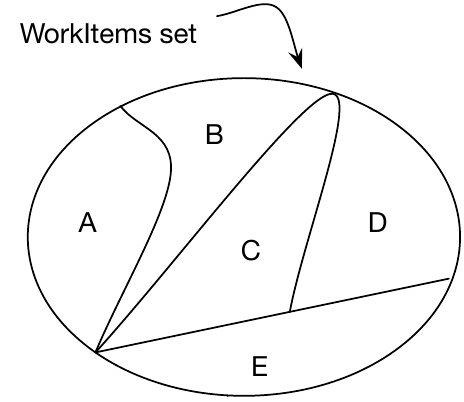}
\caption{Partitioning \WorkItems based on strict weak order relation $<_{\wis}$.}
\vspace{-1ex}
\label{fig:wipartition}
\end{figure}

Figure~\ref{fig:wipartition}, shows how \WorkItemSet is
partitioned by the strict weak ordering
relation $<_{\wis}$. Sets A, B, C, D, E are mutually exclusive
and $A \cup B \cup C \cup D \cup E = \WorkItemSet$.
Elements (\WorkItems) in A are not comparable using $<_{\wis}$, i.e. if
$w_1, w_2 \in A$ then $w_1 \nless_{\wis} w_2$ nor $w_2 \nless_{\wis} w_1$.
Same applies to other sets.
Also, induced relation order partitions in a
sequence. E.g., $B <_{\WIS} D <_{\WIS} A <_{\WIS} E <_{\WIS} C$;
as per this example the smallest partition defined by $<_{\WIS}$ is B.

\subsection{The AGM}
The AGM is formally defined in~\defref{def-graph-agm}.
\begin{mydef}\label{def-graph-agm}
  An \textbf{Abstract Graph Machine}(AGM) is a 6-tuple (G, \WorkItemSet, Q, \PF, $<_{\wis}$, S), where
  \begin{enumerate}
  \itemsep0em 
  \item G = (V, E) is the \textbf{input graph},
  \item $\textbf{\WorkItemSet} \subseteq (V \times P_0 \times P_1 \dots \times P_n)$ where each $P_i$ represents a state value or an ordering attribute,
  \item Q - Set of states represented as property maps,
  \item $\PFM : \WorkItemSet \longrightarrow \powerset{\WorkItemSet}$ is the \textbf{processing function},
  \item $<_{\wis}$ - \textbf{Strict weak ordering relation} defined on \WorkItems,
  \item S ($\subseteq \WorkItemSet$) - \textbf{Initial \WorkItem set}.
  \end{enumerate}
\end{mydef}

AGM execution starts with the \textit{initial \WorkItem set}.
The initial \WorkItem set is ordered according to the strict weak
ordering relation. Then \WorkItems within the smallest equivalence class
is fed to the \PF.
If \PF generates new \WorkItems,
they are again, separated into equivalence classes.
The \WorkItems within a single equivalence class can execute \PF
in parallel. However, \WorkItems in two different equivalence
classes must be ordered according to the induced relation (i.e. $<_{\WIS}$).
When executing \WorkItems in an equivalence class, it may generate
new \WorkItems to the same equivalence class or to an equivalence
class greater (as per $<_{\WIS}$)
than currently processing equivalence class.
The AGM executes \WorkItems in next equivalence class,
once it finished executing all the workitems in the
current equivalence class. An AGM terminates when it
executes all the \WorkItems in all the equivalence classes.

\section{Data-driven Algorithms in AGM}
\label{ddalgo}
A \textit{Data-driven algorithm} starts from a subset of \WorkItem set,
and generates more work as it progress. Towards to the end, the algorithm
generates less work and comes to a termination when it does not
generate more work. Most of the SSSP algorithms including Dijkstra's Algorithm,
$\Delta$-Stepping algorithm~\cite{meyer2003delta}, KLA~\cite{harshvardhan_kla:_2014} SSSP algorithm
are data-driven algorithms. Also, the \textit{Level Synchronous 
Breadth First Search}~\cite{bulucc2011parallel}, \textit{KLA Breadth First Search}~\cite{harshvardhan_kla:_2014}
and \textit{Push based Page Rank} discussed in ~\cite{whang2015scalable}
are also data-driven algorithms. 
In this section, we go through 
those data-driven algorithms and show the AGM formulation
for each.
Further, we present
a data-driven Connected Components algorithm
and an AGM formulation for that.
 
\subsection{Single Source Shortest Path Algorithms}
In this section, we go through several algorithms for SSSP
application and show how they can be modeled
using the Abstract Graph Machine defined in 
Definition~\ref{def-graph-agm}.


In general, the \WorkItem set for SSSP application can be defined as
$\WorkItemSet^{sssp}  \subseteq (V \times Distance)$,
where \textit{Distance} $\subseteq \mathbb{R}_+^*$. SSSP algorithms
use \textit{distance} as the output state and \textit{weight} map as a 
read-only input mapping.
The processing function for SSSP changes the distance state
if input \WorkItem's distance is less than what is already stored in the
distance map. Further, adjacent vertices of a given vertex are
accessed through the \textit{neighbors} function.
(Declared as $neighbors: V \longrightarrow \powerset{V}$).

Interestingly, most of the SSSP algorithms share almost the same processing function.
In general, the SSSP processing function ($\PFM^{sssp}$) can be defined as follows;

\begin{mydef}\label{sssp-pf}
  $\PFM^{sssp} : \WorkItemSet^{sssp} \; \longrightarrow \WorkItemSet^{sssp}$
\begin{equation*}
  \PFM^{sssp}(w) =  \begin{cases}
    \{ w_k | < w_k[0] \in \mli{neighbors}(w[0]) \; \mli{and}
    \\ \;\;\; w_k[1] \longleftarrow w[1] + \mli{weight}(w[0], w_k[0]) >
    \\ \;\;\;< distance(w[0]) \longleftarrow w[1] >
    \\ \;\;\;< \mli{if} \; w[1] < \mli{distance}(w[0]) > \} 
    \\
    \{\} \;\;\;\; \mli{else}
  \end{cases}
\end{equation*}
\end{mydef}

The processing function definition given in\defref{sssp-pf}
is organized according to the general processing function
definition given in~\ref{pf}. The processing function, \PF
has two statements. The first statement is executed
only if \WorkItem distance is less than the value stored 
in the $distance$ state for the relevant vertex in \WorkItem. 
Constructor of the first statement
specifies how to construct a new \WorkItem. 
In\defref{sssp-pf}, $w$ refers to currently processing
\WorkItem and $w_k$ is the new \WorkItem that will be
constructed. Further, the bracket operator is used
to access \WorkItem elements (As discussed in\secref{sec:workitems}).
For $w \in \WorkItemSet^{sssp}$, the term $w[0]$
refers to a vertex and $w[1]$ refers to 
the distance associated with the vertex referred by $w[0]$.

\subsubsection{Dijkstra's Algorithm}

Dijkstra's SSSP algorithm is the work efficient SSSP algorithm.
Algorithm globally orders vertices by their associated
distances and shortest distance vertices are processed first.
In the following we define the ordering relation for Dijkstra's algorithm and,
we instantiate Dijkstra's algorithm using an AGM.

\begin{mydef}\label{dj-swo}
  $<_{\mli{dj}}$ is a binary relation defined on $\WorkItemSet^{sssp}$ as follows; 
  Let $w_1, w_2 \in \WorkItemSet^{sssp}$, then;
  $w_1 <_{\mli{dj}} w_2$ iff $w_1[1] < w_2[1]$
\end{mydef}

It can be proved that $<_{dj}$ is a strict weak ordering relation that
satisfies constraints listed under Definition~\ref{dj-swo} (proof is omitted).
AGM instantiation for Dijkstra's algorithm is given in~\propref{dijkstra-agm}.

\begin{myproposition}\label{dijkstra-agm}
  Dijkstra's Algorithm is an instance of an AGM where;
 \begin{enumerate}
 \itemsep0em 
 \item $G = (V, E, weight)$ is the input graph,
 \item \WorkItemSet = $\WorkItemSet^{sssp}$,
 \item Q = \{distance\} is the state mapping,
 \item \PF = $\PFM^{sssp}$,
 \item Strict weak ordering relation $<_{\wis}$ = $<_{\mli{dj}}$,
 \item S = \{\textless$v_s$, 0\textgreater\} where $v_s \in V$ and $v_s$ is the source vertex.
 \end{enumerate}
\end{myproposition}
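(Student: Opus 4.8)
The plan is to show that the AGM obtained by plugging the six components of \propref{dijkstra-agm} into \defref{def-graph-agm} has an execution whose final \textit{distance} map coincides with the shortest-path distances from $v_s$, and whose order of finalizing vertices reproduces Dijkstra's greedy settling. Because the statement asserts that this instantiation \emph{is} Dijkstra's algorithm, the target is a correspondence between the AGM semantics of \secref{agm} and the classical algorithm, not a freestanding calculation. First I would check that the six parts are type-correct: $S = \{\langle v_s,0\rangle\} \subseteq \WorkItemSet^{sssp}$, the map $\PFM^{sssp}$ has the signature required in item~4, and $<_{\mli{dj}}$ is a strict weak ordering (already granted as provable just after \defref{dj-swo}), so the machine is well-formed and its execution semantics apply.

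Next I would characterize the equivalence classes that $<_{\mli{dj}}$ induces on $\WorkItemSet^{sssp}$. By \defref{dj-swo}, two \WorkItems $w_1, w_2$ are incomparable exactly when $w_1[1] = w_2[1]$, so each equivalence class is the set of pending \WorkItems sharing one distance value, and the induced order $<_{\WIS}$ of \defref{def-graph-aws-relation} arranges these classes by strictly increasing distance. Consequently the AGM always exhausts every pending \WorkItem of the current minimum distance before touching any larger one, which is precisely the AGM analogue of Dijkstra's extract-min step, with ``settling a vertex'' replaced by ``processing the smallest-distance \WorkItem indexed by that vertex.''

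With this in hand I would run an induction over the processing of equivalence classes in $<_{\WIS}$ order, carrying the invariant that, just before the class of distance $d$ is processed, $\mli{distance}(v) = \delta(v_s,v)$ for every $v$ with $\delta(v_s,v) < d$, while $\mli{distance}(v) \ge \delta(v_s,v)$ for all remaining $v$, where $\delta(v_s,\cdot)$ denotes the true shortest-path distance. The inductive step uses the two statements of $\PFM^{sssp}$ from \defref{sssp-pf}: when a \WorkItem $\langle v,d\rangle$ fires its first statement (the guard $w[1] < \mli{distance}(w[0])$ holds), it writes $\mli{distance}(v) \leftarrow d$ and emits relaxed \WorkItems $\langle u, d + \mli{weight}(v,u)\rangle$ for each neighbor $u$, which, since weights are non-negative ($\mli{Distance} \subseteq \mathbb{R}_+^*$), fall into strictly later classes and so never perturb the class currently being processed. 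The crux is to show that any $\langle v,d\rangle$ that actually fires satisfies $d = \delta(v_s,v)$; this is exactly Dijkstra's optimality argument, in which non-negativity of weights guarantees that the first time a vertex is reached at the minimum pending distance, that distance is already optimal.

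The step I expect to be the main obstacle is reconciling the AGM's intra-class parallelism with Dijkstra's strictly sequential settling: the AGM releases an entire iso-distance class at once and may hold several redundant \WorkItems indexed by the same vertex, whereas Dijkstra finalizes one vertex per iteration. I would resolve this by observing that the guard $w[1] < \mli{distance}(w[0])$ combined with the atomic state updates of \secref{agm} makes processing idempotent after the first successful write for a given vertex, so the order within a class is irrelevant and the resulting \textit{distance} map is independent of the schedule. Termination then follows because only distance-decreasing \WorkItems emit new work and each vertex fires at most once (after firing, its stored distance equals the minimal value $\delta(v_s,v)$ and every subsequent \WorkItem indexed by it is rejected by the guard); hence at most $|V|$ firing events occur, only finitely many \WorkItems are ever created, and the machine halts with $\mli{distance} \equiv \delta(v_s,\cdot)$, the output of Dijkstra's algorithm.
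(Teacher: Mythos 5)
Your proposal is sound, but it proves considerably more than the paper does: the paper offers no proof of\propref{dijkstra-agm} at all. The proposition is treated as a definition-by-instantiation --- the six components are simply exhibited and slotted into\defref{def-graph-agm}, with even the verification that $<_{\mli{dj}}$ is a strict weak ordering explicitly omitted. Your route is genuinely different in kind: you supply the missing correspondence argument, first checking well-formedness, then characterizing the equivalence classes of $<_{\mli{dj}}$ as iso-distance sets ordered by $<_{\WIS}$, and then running the classical Dijkstra invariant by induction over classes to show the machine settles each vertex at its true distance and terminates. What your approach buys is an actual justification that the instantiated machine \emph{computes} single-source shortest paths and mirrors extract-min, rather than merely resembling Dijkstra syntactically; what the paper's approach buys is brevity, at the cost of leaving ``is an instance of'' as an unargued identification. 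One soft spot to tighten: you claim newly emitted \WorkItems ``fall into strictly later classes,'' which requires strictly positive edge weights; with zero-weight edges a relaxation lands in the \emph{same} class, a situation the AGM semantics explicitly permits (``it may generate new \WorkItems to the same equivalence class''). Your invariant survives this, since same-class insertions still carry distance $d$ and the guard $w[1] < \mli{distance}(w[0])$ keeps processing idempotent, but the inductive step should be phrased to exhaust a possibly growing current class rather than a fixed one.
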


\subsubsection{$\Delta$-Stepping Algorithm}

$\Delta$-Stepping~\cite{meyer2003delta} arrange vertex-distance pairs
into distance ranges (\textit{buckets}) of size $\Delta (\in \mathbb{N})$
and executes buckets in order.  Within a bucket, vertex-distance pairs
are not ordered,
and can be executed in any order.
Processing a bucket may produce extra work for
the same bucket or for a successive bucket.
The strict weak ordering relation for the $\Delta$-Stepping algorithm
is given in Definition~\ref{delta-swo}.

\begin{mydef}\label{delta-swo}
  $<_\Delta$ is a binary relation defined on $\WorkItemSet^{sssp}$ as follows;
  Let $w_1, w_2 \in \WorkItemSet^{sssp}$, then; \newline
  $w_1 <_{\Delta} w_2$ iff $\lfloor w_1[1]/\Delta \rfloor < \lfloor  w_2[1]/\Delta \rfloor$
\end{mydef}

Instantiation of the $\Delta$-Stepping algorithm in AGM (given in Proposition~\ref{delta-agm})
is same as in Proposition~\ref{dijkstra-agm},
except the strict weak ordering relation is $<_{\Delta}$ (= $<_{\wis}$).

\AJ{If we need more space we can remove redundant text}
\begin{myproposition}\label{delta-agm}
  $\Delta$-Stepping Algorithm is an instance of AGM where;
 \begin{enumerate}
  \itemsep0em 
  \item G = (V, E, weight) is the input graph
  \item \WorkItemSet = $\WorkItemSet^{sssp}$
  \item Q = \{distance\} is the state mapping,
  \item \PF = $\PFM^{sssp}$,
  \item Strict weak ordering relation $<_{\wis}$ = $<_{\Delta}$
  \item S = \{\textless$v_s$, 0\textgreater\} where $v_s \in V$ and $v_s$ is the source vertex.
 \end{enumerate}
\end{myproposition}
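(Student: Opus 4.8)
The plan is to establish two things: first, that the six-tuple listed in\propref{delta-agm} is a well-formed AGM in the sense of\defref{def-graph-agm}, and second, that the execution semantics of this AGM reproduce exactly the bucketed relaxation dynamics of the $\Delta$-Stepping algorithm. Since every component except the ordering relation is identical to the Dijkstra instance, the first task reduces to checking that $<_\Delta$ from\defref{delta-swo} is a genuine strict weak ordering; I would reuse wholesale the verification already available for the shared components ($G$, $\WorkItemSet^{sssp}$, $Q=\{distance\}$, $\PFM^{sssp}$, and $S=\{\langle v_s,0\rangle\}$) from\propref{dijkstra-agm}, and concentrate the argument on the ordering relation and the induced execution order, since that is the only point of departure from Dijkstra's instance.

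First I would verify that $<_\Delta$ satisfies the four strict-weak-ordering properties listed in\secref{agm}. The clean route is to factor $<_\Delta$ through the bucket-index map $b : \WorkItemSet^{sssp} \to \mathbb{N}$ given by $b(w) = \lfloor w[1]/\Delta\rfloor$, under which $w_1 <_\Delta w_2$ holds precisely when $b(w_1) < b(w_2)$ in $\mathbb{N}$. Irreflexivity and asymmetry then follow immediately from the corresponding properties of $<$ on $\mathbb{N}$, and transitivity of comparability from transitivity of $<$ on $\mathbb{N}$. The fourth property needs only a word: two work items are incomparable under $<_\Delta$ exactly when $b(w_1) = b(w_2)$, and since equality of bucket indices is transitive, incomparability is preserved. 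This is the same template that would prove the omitted claim for $<_{\mli{dj}}$, with equality of distances replaced by equality of bucket indices.

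Next I would identify the equivalence classes and the induced order $<_{\WIS}$ with the buckets and bucket order of $\Delta$-Stepping. By the computation above, the equivalence class of a work item at distance $d$ is the set of all $\langle v,d'\rangle$ with $\lfloor d'/\Delta\rfloor = \lfloor d/\Delta\rfloor$, i.e.\ the half-open distance range $[k\Delta,(k+1)\Delta)$ with $k=\lfloor d/\Delta\rfloor$, which is precisely a $\Delta$-Stepping bucket. Applying\defref{def-graph-aws-relation}, one class precedes another under $<_{\WIS}$ iff its bucket index is smaller, so the AGM's induced order is increasing bucket order, matching $\Delta$-Stepping. It then remains to match dynamics: $\PFM^{sssp}$ (see\defref{sssp-pf}) performs exactly the relax step, and because edge weights are nonnegative, $w[1]+\weight(w[0],u) \ge w[1]$, so every emitted work item has bucket index $\ge k$. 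I would package this as an invariant: processing a work item of bucket $k$ never generates a work item in a smaller bucket. Combined with the AGM rule that the current (smallest nonempty) equivalence class is drained completely, absorbing same-bucket reinsertions, before advancing, this invariant makes the AGM execution coincide with $\Delta$-Stepping's ``settle bucket $k$, then move to $k+1$'' loop; note that the light/heavy edge distinction of the original algorithm emerges automatically, since an edge of weight $<\Delta$ may keep a work item in bucket $k$ while a heavier edge pushes it forward.

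The main obstacle, I expect, is this last dynamical correspondence rather than the algebra of the ordering. The delicate points are (i) showing that the AGM's ``drain the current class, then advance'' semantics faithfully reproduce the iterative within-bucket settling of $\Delta$-Stepping, including the reinsertions caused by light edges, and (ii) arguing termination and correctness of the resulting distance map. For (i) the non-backward-insertion invariant is the crux, since it guarantees no work is ever destined for an already-completed bucket. For (ii) I would observe that only finitely many bucket indices can ever carry work and that within each bucket $distance(\cdot)$ is monotonically non-increasing under relaxation, so each class is drained after finitely many updates; standard SSSP relaxation correctness then gives that the final $distance$ map holds the true shortest-path distances, identifying the AGM's output with that of $\Delta$-Stepping and completing the proof.
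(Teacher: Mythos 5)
Your argument is correct, and it follows exactly the route the paper itself gestures at: the paper offers no proof of this proposition at all, remarking only that the instantiation ``is same as in Proposition~\ref{dijkstra-agm}, except the strict weak ordering relation is $<_{\Delta}$'' (and even for $<_{\mli{dj}}$ the verification is explicitly omitted). Your bucket-index factorization $b(w)=\lfloor w[1]/\Delta\rfloor$, the identification of equivalence classes with $\Delta$-Stepping buckets, and the non-backward-insertion invariant (relying on nonnegative weights, consistent with $Distance\subseteq\mathbb{R}_+^*$) supply precisely the details the paper leaves out, so this is the same approach carried to completion rather than a genuinely different one.
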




\subsection{Breadth First Search (BFS) Algorithms}

In this subsection, we present AGM models for 
two distributed memory parallel BFS algorithms.
They are, level synchronous BFS and KLA BFS.

\subsubsection{Level Synchronous BFS Algorithm}

The level-synchronous breadth first search algorithm
uses data structures to store the \textit{current} and \textit{next}
vertex frontiers.
Then the next container data is swapped with current
after processing each level.

In the following we model level-synchronous BFS with the AGM. 
The level synchronous BFS order work by the level in the resulting
BFS tree.
Therefore, the ordering attribute, that we are interested in, is the \textit{level};
hence we define $\WorkItemSet^{bfs} \subseteq V \times Level$
where $Level \subseteq \mathbb{N}$.

The processing function for BFS is defined in Definition~\ref{bfs-pf}.
The state of the BFS algorithm is maintained in a map
($vertex\_level : V \longrightarrow Level$)
that store the level associated with each vertex. An infinite value
(very large value) is associated to each vertex at the start of
the algorithm. Then the infinite value is changed as the algorithm
traverse through graph level by level. 

\begin{mydef}\label{bfs-pf}
  $\PFM^{bfs} : \WorkItemSet^{bfs} \longrightarrow \powerset{\WorkItemSet^{bfs}}$
\begin{equation*}
  \PFM^{bfs}(w) =  \begin{cases}
    \{ w_k | < w_k[0] \in neighbors(w[0]) \; and
    \\ \;\;\; w_k[1] \longleftarrow w[1] + 1 >
    \\ \;\;\; < vertex\_level(w_k[0]) \longleftarrow w[1] >
    \\ \;\;\;\;\;< (if \; vertex\_level(w_k[0]) < \infty) > \}
    \\ \{\} \;\;\;\; else
  \end{cases}
\end{equation*}
\end{mydef}

The AGM is instantiated for level-synchronous BFS algorithm
is given in\propref{bfs-agm}.

\begin{myproposition}\label{bfs-agm}
  Level Synchronous BFS Algorithm is an instance of an AGM where;
 \begin{enumerate}
  \itemsep0em 
  \item G = (V, E) is the input graph,
  \item \WorkItemSet = $\WorkItemSet^{bfs}$,
  \item Q = \{ vertex\_level \} is the state mapping,
  \item \PF = $\PFM^{bfs}$,
  \item The strict weak ordering relation $<_{\wis}$ = $<_{bfs}$,
  \item S = \{\textless$v_s$, 0\textgreater\} where $v_s \in V$ and $v_s$ is the source vertex.
 \end{enumerate}
\end{myproposition}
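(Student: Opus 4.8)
The plan is to prove the proposition in two stages. First I would confirm that the comparison relation attached to Level Synchronous BFS is a legitimate strict weak ordering in the sense of the four properties of the ordering subsection; then I would show that the AGM assembled from the six listed components reproduces the frontier-by-frontier behaviour of BFS and halts with the correct vertex levels. Because $<_{bfs}$ is referenced in the statement but not written out beforehand, I would first fix it in the natural way indicated by the surrounding text, comparing the level attribute: for $w_1, w_2 \in \WorkItemSet^{bfs}$, put $w_1 <_{bfs} w_2$ iff $w_1[1] < w_2[1]$.

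For the first stage I would observe that $<_{bfs}$ compares \WorkItems solely through the integer key $w \mapsto w[1]$ under the standard order $<$ on $\mathbb{N}$, so all four properties reduce to facts about $<$ on $\mathbb{N}$: irreflexivity and asymmetry (properties 1 and 2) and transitivity (property 3) are inherited directly, while property 4 holds because non-comparability under $<_{bfs}$ is exactly equality of levels and equality is transitive. This is the same argument that is omitted for $<_{\mli{dj}}$ after \defref{dj-swo}, specialised to integer levels. It follows that the equivalence classes of $<_{bfs}$ are precisely the sets $L_\ell = \{\, w \in \WorkItemSet^{bfs} : w[1] = \ell \,\}$, one per level $\ell \in \mathbb{N}$, and by \defref{def-graph-aws-relation} the induced order arranges them as $L_0 <_{\WIS} L_1 <_{\WIS} L_2 <_{\WIS} \cdots$.

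For the second stage I would argue by induction on $\ell$ that, once the AGM has finished processing the class $L_\ell$, the state satisfies $\mli{vertex\_level}(v) = \dist(v_s, v)$ for every $v$ with $\dist(v_s, v) \le \ell$, and that for every $v$ with $\dist(v_s, v) = \ell + 1$ at least one \WorkItem $\langle v, \ell+1\rangle$ has been deposited into $L_{\ell+1}$. The base case is discharged by the initial set $S = \{\langle v_s, 0\rangle\}$, which settles $v_s$ at level $0$ and emits $\langle v, 1\rangle$ for each neighbour. In the inductive step every \WorkItem of $L_\ell$ is fed to $\PFM^{bfs}$; a \WorkItem whose vertex already carries a level no greater than $\ell$ is neutralised by the guard and yields the empty set, whereas a \WorkItem for a genuinely new level-$\ell$ vertex records that level and emits $\langle v, \ell+1\rangle$ for each neighbour. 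Since stage one together with the AGM execution semantics guarantees that all of $L_\ell$ is processed before any strictly larger class, a vertex at true distance $\ell+1$ cannot have been settled earlier and is therefore settled here. This is exactly the current/next frontier exchange of the level-synchronous algorithm, with one equivalence class playing the role of one frontier; termination follows because each reachable vertex is settled once and no \WorkItem carrying a level beyond the largest finite distance from $v_s$ survives the guard.

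The main obstacle I anticipate lies in this inductive step, specifically in reconciling the guard and state update of \defref{bfs-pf} with the claim that each reachable vertex is settled exactly once at its true distance. As printed, that definition tests and updates $\mli{vertex\_level}$ in a manner that must be read in analogy with the SSSP processing function of \defref{sssp-pf} — updating the level of the vertex being processed and admitting a \WorkItem only when it strictly improves the stored level. Under that reading I would still need the atomicity of state updates to ensure that, among the possibly many \WorkItems that reach one vertex inside a single class $L_\ell$, exactly one update takes effect, so that no spurious expansion leaks into $L_{\ell+1}$; handling these redundant but harmless \WorkItems cleanly is the delicate part of the correctness argument.
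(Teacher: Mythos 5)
The paper offers no proof of this proposition: like the other AGM instantiations, it is asserted by listing the six components, and the relation $<_{bfs}$ that item 5 refers to is never actually defined anywhere in the text. Your proposal therefore takes a genuinely different (and far more substantive) route than the paper's implicit one, which amounts to type-checking the six slots of the AGM definition. Concretely, you supply the missing definition of $<_{bfs}$ (level comparison, consistent with the $k=1$ specialization of $<_{kla}$), you verify the four strict-weak-ordering axioms that the paper only gestures at for $<_{\mli{dj}}$ and omits entirely here, and you add a level-by-level induction showing that the induced order $L_0 <_{\WIS} L_1 <_{\WIS} \cdots$ together with the AGM execution semantics reproduces the current/next frontier exchange. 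What the paper's approach buys is brevity, at the cost of leaving "is an instance of" unexamined; what yours buys is an actual correctness argument, and in carrying it out you surface two real defects in the paper's definitions: the undefined $<_{bfs}$, and the guard in\defref{bfs-pf}, which as printed tests $vertex\_level(w_k[0]) < \infty$ and so would never fire on an unvisited vertex whose level is initialized to $\infty$ (and whose update writes $w[1]$ rather than $w[1]+1$). Your reading of that statement in analogy with\defref{sssp-pf} is the only one under which the induction goes through, and your flagged concern about redundant \WorkItems within one class being neutralized atomically is exactly the point a complete proof would have to settle; as a proof plan this is sound and strictly stronger than what the paper provides.
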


\subsubsection{KLA BFS Algorithm}
The KLA BFS is similar to the level-synchronous BFS discussed above. 
Unlike in level-synchronous BFS, 
the KLA BFS processes \WorkItems asynchronously up to $k$ levels. 
In other words, \WorkItems are partitioned based on the value of $k$.
The strict weak ordering relation for KLA is defined in~\defref{kla-rel}.

\begin{mydef}\label{kla-rel}
  $<_{kla}$ is a binary relation defined on $\WorkItemSet^{bfs}$ as follows:
  Let $w_1, w_2 \in \WorkItemSet^{bfs}$, then; 
  $w_1 <_{kla} w_2$ iff $\lfloor w_1[1]/k \rfloor < \lfloor w_2[1]/k \rfloor$
\end{mydef}

The AGM instantiation for KLA BFS is same as 
level synchronous BFS instantiation
(Proposition~\ref{bfs-agm}), except the ordering relation is replaced with $<_{kla}$.

\subsection{PageRank}
\textit{PageRank}(PR)~\cite{page1999pagerank} is a 
graph algorithm extensively used in web mining.
Given a graph $G=(V,E)$, the PageRank, $PR(v)$ of a 
vertex $v$ is calculated using the formula given in Equation 1.
The variable $\alpha$ represents the teleportation parameter.
Function \textit{source} returns the source vertex given an edge and
functions \textit{in\_edges} and \textit{out\_edges} respectively return in and out
edges of a given vertex.

\begin{equation}
\label{pr-formula}
PR(v) = (1-\alpha) + \alpha\sum_{e \in in\_edges(v)}\frac{PR(source(e))}{|out\_edges(source(e))|}
\end{equation}

In PageRank, web pages are modeled as vertices and links between
web pages are edges. The PageRank algorithm calculates a numeric
weight for each page, which describes the importance of a web page.

Often PageRank algorithm is implemented as an
\textit{iterative algorithm}; i.e. algorithm iterate through
all the vertices and calculates PageRank using the formula
given in Equation 1.
The algorithm continues to calculate rank values
until the different between newly calculated value and
the previous value is less than, a given error vale - $\epsilon$.

In the data-driven form of the algorithm, the PageRank
of a vertex depends on the neighbours connected to the vertex
using an in-edge. Whenever PageRank value of a neighbour connected
through an in-edge changes, the PageRank of the current 
vertex must be re-calculated. A PageRank algorithm that
uses this argument is also explained in~\cite{whang2015scalable}. 
The AGM formulation of the PageRank algorithm uses dependency between
vertices (through in-edges) in terms of PageRank calculation to generate
work.

In PageRank the final state we are interested in
is the rank values of  vertices.
A straightforward way to model work for PageRank is to
use vertex and rank value.
One way to reduce the amount
work is to
order \WorkItems  
by the \textit{residual} of a PageRank calculated
for a vertex. (Residual based ordering is discussed in~\cite{whang2015scalable}). \AJ{TODO : Verify !}.
Residual is the portion of the PageRank value that is being
pushed through an out edge of a vertex.
\AJ{Validate with Marcin}

Based on the residue value,
the \WorkItem set for PageRank can be defined
as $\WorkItemSet^{pr} \subseteq (V \times \mathbb{R})$,
where $\mathbb{R}$ is used to represent 
the residual value.
The processing
function for PageRank takes a \WorkItem ($\in \WorkItemSet^{pr}$)
and produces more \WorkItems if the difference
between newly calculated PageRank value and previous
PageRank value is greater than $\epsilon$. Further, the algorithm
uses $rank$ mapping to store the calculated PageRank values.
The processing function for PageRank is defined in\defref{pf-pr}.

\begin{mydef}\label{pf-pr}
  $\mli{\PFM^{pr}} : \WorkItemSet^{pr} \longrightarrow \powerset{\WorkItemSet^{pr}}$
\begin{equation*}
  \mli{\PFM^{pr}}(w) =  \begin{cases}
    \{ w_n | < w_n[0] \in out\_neighbours(w[0]) \; \& \\
    \;\;\; w_n[1] \leftarrow \delta > \\
    < pr_{new} \leftarrow PR(w[0]) \; ; \; \delta \leftarrow (pr_{new} - rank(w[0])) \\
    \;\;\; ; rank(w[0]) \leftarrow pr_{new} > \\
    < (\mli{if} \; \delta > \epsilon ) > \}
    \\
    \{\} \;\;\;\; \mli{else}
  \end{cases}
\end{equation*}
\end{mydef}

The PageRank algorithm converges quickly if we process
higher residue \WorkItems first. Therefore, there are 
several possibilities to define ordering for PageRank:
1. do a strict comparison on residue value,
2. define strict weak ordering as in $\Delta$-Stepping
3. do not perform ordering at all. Each of the orderings creates a 
different size of equivalence class on PageRank
\WorkItems.
However, for the formulation presented above, we
define strict weak ordering (\defref{s-w-o-pr}) only based on the comparison of
residue
values.
Other orderings are also possible.

\begin{mydef}\label{s-w-o-pr}
  $<_{pr}$ is a binary relation defined on $\WorkItemSet^{pr}$ as follows;
  Let $w_1, w_2 \in \WorkItemSet^{pr}$, then; \newline
  $w_1 <_{pr} w_2$ iff $w_1[1] >  w_2[1] $.
\end{mydef}

The PageRank algorithm starts by assigning an initial
rank to every vertex. Therefore the initial \WorkItemSet
has a \WorkItem per each vertex and a associated
residue value initialized to 0. More formally
the initial \WorkItemSet $IW^{pr} = \{w | w[0] \in V \; and \; w[1] \leftarrow 0\}$.

With necessary parameters in hand we define the AGM
formulation for PageRank algorithm in\propref{pr-agm}.

\begin{myproposition}\label{pr-agm}
  PageRank Algorithm is an instance of an AGM where;
 \begin{enumerate}
 \itemsep0em 
 \item $G = (V, E)$ is the input graph,
 \item \WorkItemSet = $\WorkItemSet^{pr}$,
 \item Q = \{rank\} is the state mapping,
 \item \PF = $\PFM^{pr}$,
 \item Strict weak ordering relation $<_{\wis}$ = $<_{\mli{pr}}$,
 \item S = $IW^{pr}$
 \end{enumerate}
\end{myproposition}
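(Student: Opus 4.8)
The plan is to verify that the six components listed in \propref{pr-agm} meet the defining requirements of an AGM from \defref{def-graph-agm}, with the bulk of the formal work being the confirmation that $<_{pr}$ is a genuine strict weak ordering. Most components type-check immediately: $G=(V,E)$ is the given input graph; $\WorkItemSet^{pr}\subseteq(V\times\mathbb{R})$ is a vertex-indexed set of the shape required by \defref{def-graph-wi}, with the single attribute slot $P_0=\mathbb{R}$ holding the residual; $Q=\{rank\}$ is a property map over $V$; and $\PFM^{pr}:\WorkItemSet^{pr}\longrightarrow\powerset{\WorkItemSet^{pr}}$ from \defref{pf-pr} carries exactly the signature demanded by item~4 of \defref{def-graph-agm}. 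I would also check that $S=IW^{pr}=\{w\mid w[0]\in V,\ w[1]\leftarrow 0\}$ is a subset of $\WorkItemSet^{pr}$, which holds by construction, so $S$ is a legitimate initial \WorkItem set.

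The central step is to show that $<_{pr}$ from \defref{s-w-o-pr}, defined by $w_1<_{pr}w_2$ iff $w_1[1]>w_2[1]$, satisfies the four properties required of a strict weak ordering in Section~\ref{agm}. Since $<_{pr}$ compares \WorkItems solely through the real-valued residual $w[1]$ under the strict total order $>$ on $\mathbb{R}$, each property reduces to a standard fact about $>$: irreflexivity, since $w[1]\not>w[1]$, gives property~1; asymmetry of $>$ gives property~2; transitivity of $>$ gives property~3; and for property~4 I would observe that two \WorkItems are incomparable under $<_{pr}$ exactly when neither $w_1[1]>w_2[1]$ nor $w_2[1]>w_1[1]$ holds, i.e. when $w_1[1]=w_2[1]$, so incomparability coincides with equality of residuals, whose transitivity is immediate. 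Hence $<_{pr}$ is a strict weak ordering whose equivalence classes are the sets of \WorkItems sharing a common residual, and by \defref{def-graph-aws-relation} it induces the expected order on $\powerset{\WorkItemSet^{pr}}$, processing higher-residual classes first.

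I expect this ordering check to be the routine part, and the genuinely delicate part to be the \emph{semantic} claim behind ``is an instance of'': that running $(G,\WorkItemSet^{pr},Q,\PFM^{pr},<_{pr},IW^{pr})$ stabilizes with $rank(v)$ equal to the PageRank fixed point of \eqref{pr-formula}. Here I would argue that $\PFM^{pr}$ faithfully implements the push-based update, since each time a vertex's rank changes by more than $\epsilon$ it emits residual-carrying \WorkItems to its out-neighbours via $out\_neighbours$, which is precisely the dependency induced by the in-edge sum of \eqref{pr-formula}, and that processing higher-residual classes first only affects the volume of intermediate work, not the reachable fixed point. The main obstacle is therefore making the termination argument precise: one must show the residuals eventually all drop below $\epsilon$ so that $\PFM^{pr}$ returns $\{\}$ on every remaining \WorkItem and the AGM halts. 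Following the paper's treatment of the analogous SSSP and BFS propositions, I would either invoke the known convergence of residual-based PageRank~\cite{whang2015scalable} or relegate this verification to a remark, keeping the proof itself focused on the structural instantiation and the ordering properties.
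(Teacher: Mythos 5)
Your proposal is correct and matches the paper's (essentially implicit) treatment: the paper states Proposition~\ref{pr-agm} as a direct instantiation of Definition~\ref{def-graph-agm} with no written proof, omitting even the verification that the ordering is a strict weak ordering (as it did for $<_{dj}$), and your component-by-component type-check together with the reduction of the four ordering axioms to irreflexivity, asymmetry, and transitivity of $>$ on the residuals is exactly the routine argument the paper relies on. Your observation that the semantic convergence claim is left unaddressed is accurate, and deferring it to the cited residual-based PageRank analysis is consistent with how the paper handles the analogous SSSP and BFS propositions.
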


\AJ{State update should come before condition - change everywhere}
\subsection{Connected Components Algorithm}
For a given undirected graph, $G = (V, E)$, the \textit{connected component} (CC) 
is a \textit{subgraph}
in which, any two vertices are connected through a \textit{path}. A connected component
can also be defined as a \textit{reachable} relation. A vertex $v \; (\in V)$ 
is reachable to vertex $u \; (\in V)$ if and only
if there is a path from $v$ to $u$.

In the literature, we find two main types of parallel connected component algorithms:
1. \textit{Shiloach-Vishkin's $O(log n)$ based algorithms}~\cite{shiloach1982logn}, 
2. \textit{Search based algorithms}~\cite{hirschberg1979computing}.
Algorithms derived from Shiloach-Vishkin's connected components
are not data-driven algorithms. They are 
\textit{iterative algorithms} and will be discussed in 
Section~\ref{secite}. 
In the following we discuss a search based connected component
algorithm.

Search based algorithms mainly 
use \textit{Depth-first search} (DFS), \textit{Breadth-first search} (BFS)
or a \textit{Chaotic search} algorithm (See~\cite{hirschberg1979computing}). 
DFS based algorithms give little support to explore
the available parallelism. 
Both BFS based algorithms and Chaotic
search based algorithms have more opportunity
to explore parallelism. To present the AGM formulation
we will use a Chaotic search based algorithm.

A parallel search (Chaotic) based algorithm is given in Listing~\ref{search-cc}. 
Algorithm uses
a property map (\textit{component}) to record the component id of each vertex. 
Component map is the output state of the algorithm.
Initially, the value of component id of a vertex is assigned
to a very large number, then component map is 
updated when the function ``CC'' is invoked. The ``VertexComponent''
structure contains the vertex id and the component id for the vertex.

\begin{algorithm}
\algsetup{linenosize=\scriptsize}
\caption{Search Based CC Algorithm}
\label{search-cc}
\textbf{Input:} Graph $G=(V, E)$
\begin{algorithmic}[1]
  \STATE Initialize() \{
  \FOR{each vertex $v$ in $V$}
  \STATE $component(v) \longleftarrow \infty$
  \ENDFOR
  \STATE \}
  \STATE
  \STATE CC(vc : VertexComponent) \{
  \STATE $old\_component \longleftarrow component(vc.vertex)$
  \IF {$vc.component < old\_component$}
  \STATE $component(v) \longleftarrow vc.component$
  \FOR {each $v$ in adjacencies(vc.vertex)}
  \STATE CC(VertexComponent(v, vc.component))
  \ENDFOR
  \ENDIF
  \STATE \}
  \STATE
  \STATE Invoke() \{
  \STATE Initialize()
  \FOR{each vertex $v$ in $V$}
  \STATE CC(VertexComponent(v, v))
  \ENDFOR
  \STATE \}
\end{algorithmic}
\end{algorithm}

We can develop several versions of the above algorithm. The algorithm in Listing~\ref{search-cc}
is a chaotic search algorithm. We also can develop the Dijkstra's version
and $\Delta$-Stepping version of the algorithm.
In the following we model above algorithm using an AGM.

The \WorkItems for search based CC should include the distance
and the \textit{component id} (Since component id represents the state). 
Therefore, we define $\WorkItemSet^{cc} \subseteq (V \times Component)$.
In the definition the $Component \subseteq \mathbb{N}$.
The processing function for search based CC is similar to SSSP processing
function except it updates component id instead of the distance.
Processing function for CC is given in Definition~\ref{sb-cc-pf}.

\begin{mydef}\label{sb-cc-pf}
  $\PFM^{cc} : \WorkItemSet^{cc} \longrightarrow \powerset{\WorkItemSet^{cc}}$
\begin{equation*}
 \PFM^{cc}(w) =  \begin{cases}
    \{ w_k | < w_k[0] \in \mli{neighbors}(w[0]) \\
    \;\;\;\; and \; w_k[1] \longleftarrow  w[1] > \\
    \;\;\; < component(w[0]) \longleftarrow w[1] > \\
    \;\;\; < \mli{if} \; (w[1] < component(w[0]) > \} 
    \\
    \\
    \{\} \;\;\;\; \mli{else}
  \end{cases}
\end{equation*}
\end{mydef}

We order \WorkItems by component id so that the smallest component ids
are processed  first. The order \WorkItems processed
does not affect the correctness of the Algorithm~\ref{search-cc}.
Therefore, more relaxed ordering can also be applied
to CC AGM formulation. For simplicity we use
the ordering based on component id. The strict weak ordering
relation for CC is defined in Definition~\ref{sb-cc}.

\begin{mydef}\label{sb-cc}
  $<_{cc}$ is a binary relation defined on $\WorkItemSet^{cc}$ as follows;
  Let $w_1, w_2 \in \WorkItemSet^{cc}$, then; \newline
  $w_1 <_{cc} w_2$ iff $w_1[1] < w_2[1]$.
\end{mydef}

In Proposition~\ref{sb-cc-agm} we define the AGM for search based connected
components.
\begin{myproposition}\label{sb-cc-agm}
  CC Algorithm is an instance of an AGM where;
 \begin{enumerate}
  \itemsep0em 
  \item $G = (V, E)$ is the input graph
  \item \WorkItemSet = $\WorkItemSet^{cc}$
  \item Q = \{components\} is the state mapping,
  \item \PF = $\PFM^{cc}$
  \item Strict weak ordering relation $<_{\wis}$ = $<_{cc}$
  \item S = $\{w | w[0] \in V \; and \; w[1] = w[0]\}$.
 \end{enumerate}
\end{myproposition}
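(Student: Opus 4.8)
The plan is to discharge the two obligations implicit in Definition~\ref{def-graph-agm}: first, that the six listed components are well typed and that $<_{cc}$ is a genuine strict weak ordering; and second, that executing this AGM computes the connected-components labelling, i.e.\ that its terminal state coincides with the output of Algorithm~\ref{search-cc}.

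For well-formedness I would first observe that $\WorkItemSet^{cc}\subseteq(\V\times Component)$ with $Component\subseteq\mathbb{N}$ matches the required shape $\V\times P_0$, that $\PFM^{cc}$ (Definition~\ref{sb-cc-pf}) has signature $\WorkItemSet^{cc}\longrightarrow\powerset{\WorkItemSet^{cc}}$, and that $S=\{w\mid w[0]\in\V,\ w[1]=w[0]\}\subseteq\WorkItemSet^{cc}$, which requires reading vertex identifiers as elements of $\mathbb{N}$ so that $w[1]=w[0]$ typechecks. Then I would verify that $<_{cc}$ is a strict weak ordering: since $w_1<_{cc}w_2 \iff w_1[1]<w_2[1]$ pulls back the strict total order of $\mathbb{N}$ along the projection $w\mapsto w[1]$, irreflexivity, antisymmetry and transitivity (properties 1--3) are inherited immediately; for property 4, two items are incomparable exactly when their component values are equal, and equality is transitive, so incomparability is transitive too. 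As with $<_{dj}$ (Definition~\ref{dj-swo}), this part is routine and can be stated tersely.

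The substantive step is correctness. I would show the AGM terminates and that at termination $component(v)$ equals the least vertex identifier in the connected component of $v$, which is exactly what Algorithm~\ref{search-cc} computes. The key lemma is monotonicity: the only mutation is $component(w[0])\leftarrow w[1]$ guarded by $w[1]<component(w[0])$, so each vertex's value is non-increasing and bounded below (vertex identifiers form a well-founded set); moreover the propagated value $w_k[1]\leftarrow w[1]$ is always the identifier of some vertex reachable from $v$ within its own component, giving the lower bound $component(v)\ge\min\{\,u : u \text{ in the component of } v\,\}$. For the matching upper bound I would fix $v$, let $c$ be that minimum identifier, and trace the initial item $\langle c,c\rangle\in S$: since $c$ and $v$ share a component there is a path $c=x_0,x_1,\dots,x_\ell=v$, and an induction along it shows $c$ is relaxed into each $component(x_i)$ in turn (whenever $c<component(x_i)$ the guard fires and emits $\langle x_{i+1},c\rangle$), so $component(v)\le c$. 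Termination follows because each firing strictly decreases some $component(v)$, whose range is the finite set of vertex identifiers; hence only finitely many updates occur and every other work item is discarded by the guard.

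The hard part will be justifying that the induced class ordering $<_{\WIS}$ does not affect the terminal state: the AGM processes equivalence classes of equal component value in increasing order, whereas Algorithm~\ref{search-cc} is chaotic, yet the proposition equates them. I would derive order-independence from the same monotonicity, observing that the guarded update realizes a componentwise-$\min$ over all values reaching each vertex, so the whole system is monotone and admits a unique least fixed point independent of the schedule in which guards fire (a standard chaotic-iteration argument). Consequently the ordering $<_{cc}$ influences only the volume of discarded, unnecessary work and not the fixed point attained. I expect this confluence argument to be the crux; once monotonicity is established, the remaining reachability and termination claims are bookkeeping.
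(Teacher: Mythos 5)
Your proposal is correct, but it is doing far more than the paper does: the paper offers no proof of Proposition~\ref{sb-cc-agm} at all. Like the other instantiations (Dijkstra, $\Delta$-stepping, BFS, PageRank), it is presented as a modeling claim --- the six components are simply exhibited and asserted to match the template of Definition~\ref{def-graph-agm} --- and the paper explicitly declines even the routine part, remarking in the analogous Dijkstra case that the verification that the relation is a strict weak ordering ``is omitted.'' The only trace of your substantive obligations in the text is the one-line remark that ``the order \WorkItems processed does not affect the correctness of the Algorithm~\ref{search-cc},'' which is exactly the confluence claim you identify as the crux and actually discharge via monotonicity and the unique-least-fixed-point argument for chaotic iteration. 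So the comparison is: the paper treats the proposition as definitional (the components typecheck and $\PFM^{cc}$ transcribes the guarded relax step of Algorithm~\ref{search-cc}), while you additionally prove that the resulting AGM terminates, computes the min-identifier labelling, and does so independently of the schedule imposed by $<_{\WIS}$. All of that is sound; your flag that $S=\{w \mid w[0]\in V,\ w[1]=w[0]\}$ forces vertices to be read as elements of $Component\subseteq\mathbb{N}$ is a real (if implicit) assumption of the paper, and your path-induction for the upper bound should just note that if the guard fails at $x_i$ because $component(x_i)$ already equals $c$, the item $\langle x_{i+1},c\rangle$ was emitted at the earlier firing that set it, so the induction still goes through. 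Nothing in your argument conflicts with the paper; it fills a gap the paper leaves open by design.
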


\section{AGM for Non-Data-driven \\ Algorithms}
\label{agmnondd}
So far we modeled data-driven algorithms in
AGM.
A natural question to ask is whether AGM
approach can be used to model other kinds
of graph algorithms. The answer depends on the type
of the graph algorithm we are focused on.
For some types of graph algorithms AGM
approach can be used with few modifications.
Some other graph algorithms cannot be modeled using
AGM due to the in-adequateness to express orderings
using strict weak orderings and use set operations
in those algorithms. In this section,
we analyze the applicability of AGM to several 
non data driven
algorithm types.

\subsubsection{Iterative Algorithms}
\label{secite}

An \textit{iterative algorithm} travels through
all the vertices (or edges) until algorithm
meets a specific state condition. Example
algorithms are \textit{Iterative PageRank}~\cite{gleich2004fast}
and \textit{Shiloach-Vishkin Connected Components}~\cite{shiloach1982logn}.
We use iterative PageRank as an example to dicuss
the applicability of AGM to iterative algorithms,
but discussion is general in which it applies
to other iterative algorithms also.

\textit{Iterative PageRank} iterate through
all the vertices until all vertices reach
a saturated PageRank value. PageRank value of
a vertex is saturated if the difference between
newly calculated PageRank value and the previous
PageRank value is less than a defined error 
value. In the parallel iterative PageRank vertices
in a single iteration are processed parallely and
after each iteration algorithm checks whether
the PageRank values has reached a saturation.

Vertices processed in parallel within
a single iteration
goes into a single equivalence class in the
AGM formulation of the iterative PageRank. But 
vertices belonging to different iterations should be placed
in different equivalence classes. Yet, after
processing each equivalence class AGM must check
whether PageRank has reached its saturation
state. 
\AJ{Maybe a figure}

The current, AGM formulation does not have
fascility to check a condition before processing
next equivalence class. The AGM can be
augmented to check a condition
after processing each equivalence class.
Since equivalence classes are defined
based on the iteration, a \WorkItem
of an iterative algorithm must have 
the \textit{iteration number}
as a member and strict weak ordering
must be defined in such a way, \WorkItems
that has same iteration numbers are 
not comparable.


\subsubsection{Divide \& Conquer Algorithms}

A \textit{Divide \& Conquer} algorithm recursively
breaks down a problem into two or more
sub problems, until they become sufficiently
simple to solve. In a parallel Divide \& Conquer
algorithm divisions are conquered parallely.
An example is \textit{Divide \& Conquer Strongly Connected Components} 
(DCSCC)~\cite{fleischer2000identifying}.
The question is whether we can use AGM
to express Divide \& Conquer algorithms.

The answer is \textit{no}, we cannot use AGM
to express parallel \textit{Divide \& Conquer}
algorithms. The main reason is that the strict
weak ordering cannot be used to express
the ordering in a Divide \& Conquer algorithm
in such a way it produces full available
data parallelism.
\AJ{review}
Ordering in divide \& conqer algorithms 
is organized based on subset relationships
of divisions. Therefore, the strict weak
ordering is too strong to represent
such ordering. In Appendix~\ref{appscc} we use DCSCC
 algorithm as an example to show
that divide \& conqer algorithm ordering
cannot be expressed using a strict
weak ordering. Rather, \WorkItems in DCSCC algorithm is
ordered using a \textit{partial order} (based on the subset relationship of
divisions).

\subsubsection{Algorithms with Nested Parallelism}

Some algorithms show multiple levels
of parallelism. Usually these algorithms
have a visible data parallelism then a second
level of \textit{task parallelism} or
\textit{data parallelism}.
Such nested parallel constructs are
not easily transferrable to processing
function so that it is amenable to
execute in a distributed environment.
We see these types of algorithms in
applications such as \textit{Minimum Spanning Tree} (MST)
and \textit{Betweeness Centrality}.
In the following we use MST as an example
and briefly analyze 
nested parallel algorithms in related
to AGM.

A \textit{spanning tree}, T of a graph G is a subgraph 
that includes all the vertices of G and is a tree. 
A minimum spanning tree (MST) of an undirected, 
connected, weighted graph G is a 
spanning tree that connects all the 
vertices with minimum weighted edges.
There are
3 widely used algorithms to solve MST problem. They
are 1. Prim's Algorithm~\cite{Cormen:2001}, 2. Kruskal Algorithm~\cite{Cormen:2001}
and 3. Bor\r{u}vka's Algorithm~\cite{nevsetvril2001otakar}. 

Both Prim's algorithm 
and Kruskal algorithm does not provide much data parallelism. 
However, Bor\r{u}vka's algorithm provides more
parallelism and most of the existing
parallel (and also distributed)
implementations are based on Bor\r{u}vka's
algorithm (See ~\cite{gallager1983distributed}.~\cite{garay1998sublinear}).
Bor\r{u}vka's algorithm starts with a forest
(i.e. each vertex is a component initially)
and finds the minimum 
weight edge betwee two components. 
If algorithm finds a such edge, 
components are connected using the found edge.

To find the representative component set 
Bor\r{u}vka's algorithm uses
\textit{disjoint union} data structure.
In a distributed setting calculating the
representative set for a given source vertex
and for a given target vertex of an edge can
be performed in parallel.
Therefore in addition to parallel
processing of components algorithm
also can process calculation of 
representative components for source
vertex and target vertex in parallel.
\AJ{TODO: Need a figure}

The current AGM formulation does not have
fascility to model nested levels of parallel work;
i.e. work for parallel component processing
and work for parallely calculate representative
vertex sets for source and target of an
edge. Another algorithm that shows the same behaviour
is \textit{Parallel Betweeness Centrality}~\cite{bader2006parallel}
algorithm. Both algorithms generate parallel work
in a \textit{fork-join} like structure.

\AJ{In future work mention that we expect to extend current AGM
to express nested levels of work}

\section{Conclusion}

In this paper we presented Abstract Graph Machine (AGM),
a mathematical model for distributed memory parallel
graph algorithms that converges. The model formulate 
edge traversals as work and express an algorithm
using a common function executed by every process in the system
and an ordering that divides work into equivalence classes.
We showed that existing data driven algorithms can be
modeled using the AGM and also iterative algorithms. Algorithms 
that converge using set operations cannot be modeled in the AGM.
We believe algorithms can be derived in such a way they can
be modeled in the AGM.

Modeling algorithms using AGM generalizes existing algorithms, 
also 
model allows us to derive new variations of algorithms
by changing the way algorithm order work. Ordering
of an algorithm can be changed by either changing the average
size of an equivalence class generated by the ordering or by
introducing new ordering attributes.
Further, in future we plan to
use AGM model to build cost models for distributed memory
parallel
graph algorithms. 

\section{Acknowledgments}
Authors (Thejaka) would like to thank
Prabath Silva, for his valuable input
on orderings in divided and conquer algorithms.
I (Thejaka) thank Martina Barnas for many discussions
about AGM algorithm abstractions and for her valuable feedback on
this paper.
Further, 
this material is based upon work 
supported by the National Science Foundation under Grant No. 1319520
and National Science Foundation under Grant No. 1111888.

\bibliographystyle{abbrv}
\bibliography{paper}  
%
%
\appendix
\section{Strongly Connected \\ Components}
\label{appscc}
\textit{Strongly Connected Component}(SCC) is a subgraph of a directed
graph where every vertex is reachable from every other vertex in the
subgraph. 
The DCSCC algorithm for SCC selects a random vertex (\textit{pivot})
and divides vertices set as
vertices reachable from the pivot (\textit{descendents})
and vertices
that can reach the pivot (\textit{predecessors}).
~\cite{fleischer2000identifying} proves that intersetion of predecessors
and descendents is a strongly connected component
containing the pivot.
~\cite{fleischer2000identifying} also proves that other SCC are either in
predecessor set or descendent set or in the \textit{remainder}
($= V - (predecessors \cup descendents)$). Then, DCSCC
algorithm applies same procedure to descendents, predecessors
and to remainder.
Descendents (Let's call this \textit{FWD} set) and predecessors(\textit{BWD} set)
can be calculated in parallel.
Then, DCSCC algorithm finds a strongly connected component (\textit{SCC})
by calculating the set intersection of FWD and BWD. Afterwards, DCSCC algorithm
divides vertex set into 3 segments. Segment1 = $FWD-SCC$,
Segment2 = $BWD-SCC$ and Segment3 = $remainder$ (REM).
Then each segment is processed in parallel. We can express the parallel
execution of DCSCC in a tree as depicted in Figure~\ref{fig:dcscc}.

\begin{lemma}
 DCSCC algorithm \textbf{cannot} be modeled with an AGM.
\end{lemma}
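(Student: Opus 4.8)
```latex
The plan is to prove the impossibility by exhibiting a structural
mismatch between the ordering that DCSCC requires and what a strict weak
ordering can provide. The key observation, established in the appendix
discussion, is that the recursive structure of DCSCC is organized by the
\emph{subset relationship} of the vertex divisions: after splitting at a
pivot, the segments $FWD-SCC$, $BWD-SCC$, and $REM$ are
pairwise disjoint, and each is processed independently and in parallel.
Crucially, there is no ordering \emph{between} these three segments ---
they are incomparable and may execute concurrently --- yet each segment
must be processed \emph{before} its own recursive sub-segments. This is
precisely a \emph{partial order} on the divisions induced by strict set
containment, and the heart of the proof is to show that such a partial
order cannot coincide with any strict weak ordering without destroying
the available parallelism.

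First I would set up the formal framework: suppose for contradiction
that DCSCC is an instance of an AGM, so that its \WorkItems (which must
carry the division/segment information, since that drives the recursion)
are ordered by some strict weak ordering $<_{\wis}$. By the four
properties listed in\secref{agm}, $<_{\wis}$ partitions \WorkItemSet
into equivalence classes (the non-comparable blocks) that are themselves
totally ordered by the induced relation $<_{\WIS}$ of\defref{def-graph-aws-relation}.
The plan is then to extract from a single run of DCSCC a small witness
configuration of three divisions --- say two sibling segments $A$ and $B$
produced at one level, and a child $C \subsetneq A$ produced one level
deeper --- and to track where their associated \WorkItems must land
among the equivalence classes.

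The core of the argument is a case analysis showing these three cannot be
consistently placed. Since $A$ and $B$ are sibling segments that execute
in parallel, their \WorkItems must be non-comparable, hence lie in the
\emph{same} equivalence class (by Property 4, non-comparability is
transitive and defines the classes). Since $C \subsetneq A$ is processed
strictly after $A$, the \WorkItem for $C$ must lie in a \emph{strictly
later} equivalence class than that of $A$, i.e. the $A$-class
$<_{\WIS}$ the $C$-class. But $B$ and $C$ arise from disjoint branches of
the recursion with no containment relation between them, so DCSCC imposes
no ordering between them --- they too should be parallelizable, forcing
$B$ and $C$ into the same equivalence class. Combining these, $A$ and $B$
share a class, $B$ and $C$ share a class, yet $A$'s class strictly
precedes $C$'s class; since equivalence classes are disjoint blocks and
$<_{\WIS}$ is an ordering on them, a single \WorkItem (the $B$-item)
cannot belong to two distinct classes, giving the contradiction.

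The main obstacle I anticipate is \emph{pinning down what the \WorkItems
and their ordering attributes must be} so that the argument is not
vacuous: one must argue that \emph{any} faithful AGM encoding of DCSCC is
forced to reflect the subset/containment structure of the divisions in
its ordering (otherwise the processing function could not decide which
segment a \WorkItem belongs to, nor enforce that a parent segment
completes before its children). Once that modeling constraint is
justified --- essentially that the induced order $<_{\WIS}$ must refine
the containment partial order while leaving incomparable (disjoint,
non-nested) divisions in a single class for parallelism --- the
contradiction above follows, because a strict weak order yields a
\emph{total} order on equivalence classes whereas the containment
structure genuinely branches. I would make this precise by noting that
the required relation (same class iff disjoint-or-unrelated, later class
iff strictly contained) fails transitivity of non-comparability
(Property 4): $A$ non-comparable with $B$ and $B$ non-comparable with $C$,
yet $A$ comparable with $C$ --- directly violating the defining property
of a strict weak ordering, which completes the proof.
```
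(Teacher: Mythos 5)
Your proposal is correct and follows essentially the same route as the paper's proof: the paper's witness triple is $FWD^1 - SCC^1$, $BWD^1 - SCC^1$, and $FWD^2$, playing exactly the roles of your $A$, $B$, and $C$, and the contradiction is likewise that $B$ would have to lie in two distinct equivalence classes, i.e.\ that non-comparability fails to be transitive. Your explicit appeal to Property~4 is a slightly cleaner way of phrasing what the paper states as ``the induced relation does not divide work items into partitions,'' but the argument is the same.
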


\begin{proof}
  Proof is by contradiction.
  Suppose DCSCC can be expressed using an AGM.
  Then, WorkItems in DCSCC can be ordered using a strict weak ordering relation $R$.
  R induces an equivalence relation and the induced relation partition work so that,
  work items in the same partition can be executed in parallel. Consider Figure~\ref{fig:dcscc}.
  We should be able to execute work items in parallel as long as those work items does
  not relate to each other by a parent (including grand-parents) - child relationship. i.e.
  work items in $FWD^1 - SCC^1$ and $FWD^2$ cannot be executed in parallel. Therefore
  $FWD^1 - SCC^1$ and $FWD^2$ must belong to two different partitions in the
  induced equivalence relation (Partition 1 and Partition 2 in Figure~\ref{fig:dcscc-proof}).
  But we should be able to execute $FWD^1 - SCC^1$ and
  $BWD^1 - SCC^1$ in parallel as they dont relate to each other with a (grand)parent-child
  relationship. In other words, $FWD^1 - SCC^1$ and $BWD^1 - SCC^1$ belong to the same partition.
  Similarly $FWD^2$ and $BWD^1 - SCC^1$ should belong to the same partition
  as they also can be executed in parallel. This shows that two partitions
  (partition belong to $FWD^1 - SCC^1$ and partition belong to $FWD^2$) has an
  intersection (See Figure~\ref{fig:dcscc-proof}). This also shows that the induced
  relation does not divide work items into partitions $\implies$ The induced relation is
  not an equivalence relation $\implies$ R is not a strict weak ordering relation.

  Therefore, our original assumption is wrong, i.e. there cannot be a strict weak
  ordering relation to divide work items in DCSCC and therefore we cannot express
  DCSCC in an AGM.
\end{proof}

\begin{figure}
\centering
\includegraphics[width=.55\linewidth]{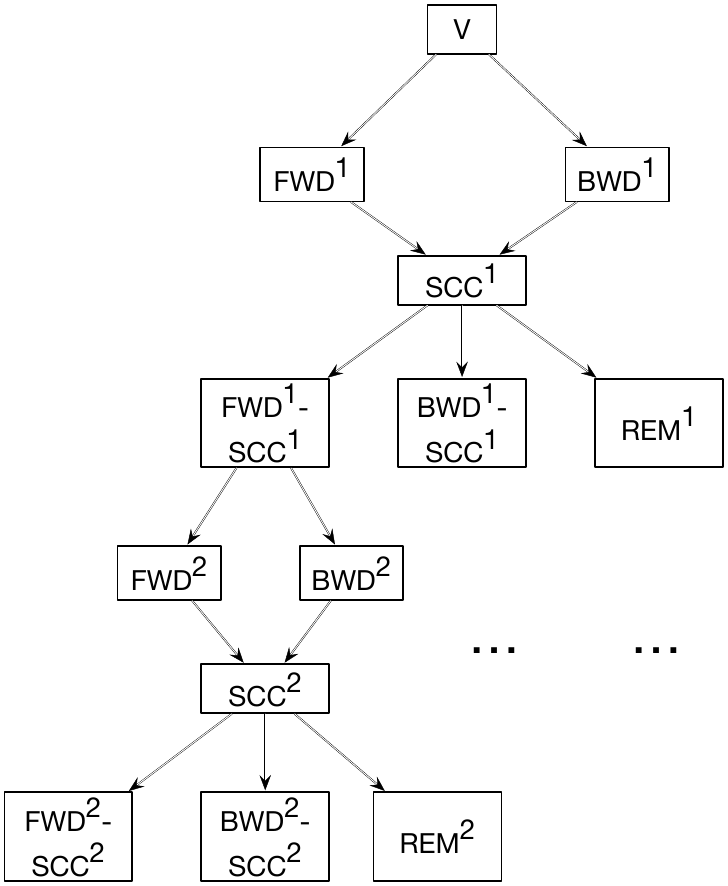}
\caption{Parallel work in DCSCC algorithm.}
\vspace{1ex}
\label{fig:dcscc}
\end{figure}

\begin{figure}
\centering
\includegraphics[width=.55\linewidth]{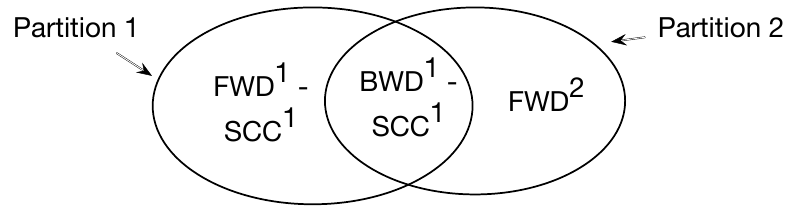}
\caption{$BWD^1 - SCC^1$ is the common intersection for two partitions.}
\vspace{1ex}
\label{fig:dcscc-proof}
\end{figure}

\end{document}